\newcommand{\stkout}[1]{\ifmmode\text{\sout{\ensuremath{#1}}}\else\sout{#1}\fi}
\DeclareMathOperator{\Tr}{tr}
\newtheorem{lemma}{Lemma}
\newtheorem{result}{Result}
\newcommand{\ket}[1]{|#1\rangle}
\newcommand{\bra}[1]{\langle#1|}
\newcommand{\ketbra}[2]{|#1\rangle\langle#2|}
\newcommand{\ctm}{\cos\frac{\theta_m}{2}}
\newcommand{\stm}{\sin\frac{\theta_m}{2}}
\newcommand{\red}[1]{\textcolor{red}{#1}}
\begin{document}


\title{Noise-robust preparation contextuality shared between any number of observers via unsharp measurements}


\author{Hammad Anwer}
\affiliation{Department of Physics, Stockholm University, S-10691 Stockholm, Sweden}

\author{Natalie Wilson}
\affiliation{Department of Physics, Stockholm University, S-10691 Stockholm, Sweden}

\author{Ralph Silva}
\affiliation{Institute for Theoretical Physics, ETH Zurich, Switzerland}

\author{Sadiq Muhammad}
\affiliation{Department of Physics, Stockholm University, S-10691 Stockholm, Sweden}

\author{Armin Tavakoli}
\affiliation{D\'epartement de Physique Appliqu\'ee, Universit\'e de Gen\`eve, CH-1211 Gen\`eve, Switzerland}
\affiliation{Institute for Quantum Optics and Quantum Information - IQOQI Vienna, Austrian Academy of Sciences, Boltzmanngasse 3, 1090 Vienna, Austria}
\affiliation{Institute for Atomic and Subatomic Physics, Vienna University of Technology, 1020 Vienna, Austria}

\author{Mohamed Bourennane}
\affiliation{Department of Physics, Stockholm University, S-10691 Stockholm, Sweden}

\begin{abstract}
Multiple observers who independently harvest nonclassical correlations from a single physical system share the system's ability to enable quantum correlations. We show that any number of independent observers can share the preparation contextual outcome statistics enabled by state ensembles in quantum theory. Furthermore, we show that even in the presence of any amount of white noise, there exists quantum ensembles that enable such shared preparation contextuality. The findings are experimentally realised by applying sequential unsharp measurements to an optical qubit ensemble which reveals three shared demonstrations of preparation contextuality.
\end{abstract}


\maketitle


\section{Introduction}
Quantum correlations can surpass the limitations of corresponding classical models. In their most common form, quantum correlations are obtained from the outcomes of \textit{single} (albeit randomly chosen) measurements performed on a physical system. After the measurement, the physical system can be discarded, or even demolished by the measurement apparatus.  Therefore, since one does not need to consider the measurement-induced decoherence in the state of the physical system, optimal quantum correlations are often obtained from sharp (projective) measurements that extract a maximal amount of information from the physical system while also inducing a maximal disturbance in its state \cite{Fuchs}.

Arguably, the fact that measurements disturb physical states should have interesting consequences for more general quantum correlations. To reveal the influence of measurement-induced disturbances on observed outcome statistics, one requires systems to undergo more than a single measurement. A simple scenario for studying the trade-off between the strength of quantum correlations and the disturbance induced by extracting them is one in which quantum correlations are \textit{shared} between many observers. Sharing quantum correlations means that a physical system is measured by a sequence of independent observers, each of whom are tasked with falsifying the existence of a classical model for their observed correlations. Hence, the stronger the correlations extracted by the first observer, the larger the disturbance induced in the state of the system, and thus the weaker the correlations that can possibly be extracted by a second observer. Sharing quantum correlations requires the first observer to measure in such a way that the outcome correlations are strong enough to elude all classical models while the induced disturbance is small enough to enable a second observer independently repeat the same feat. Understanding and characterising quantum correlations obtained via sequential measurements is a conceptually interesting problem \cite{SG16, BM13, GW14, TC18, Brown} which has promising applications in quantum information protocols \cite{CJ17, CH18, Foletto2020}. 

Sharing quantum correlations was first studied in the context of Bell inequality tests \cite{SG16} where it was found that a pair of qubits in a singlet state can enable two sequential Bell inequality violations. This has also been experimentally demonstrated \cite{Exp1, Exp2}. Moreover, shared quantum correlations have recently also been studied in other tasks such as entanglement witnessing \cite{BM18}, quantum steering \cite{SD18, SDH18} and a semi-device-independent setting \cite{Mohan, Miklin, AnwerRAC, FolettoRAC}.

Here, we theoretically and experimentally study the sharing of quantum correlations that demonstrate preparation contextuality. These are correlations that cannot be reproduced in a hidden variable theory that ascribes equivalent representations to indistinguishable preparations, i.e. it disregards the context (specific procedure) underlying a state preparation \cite{Sp05}. Such quantum contextuality does not require entanglement but only single quantum systems, and is well-studied both in theory   (see e.g.~Refs.\cite{Sp05, Negativity, SpekkensPOM, LM13, BB15, SC18, Ghorai, Uola, Hierarchy1, Hierarchy2}) and experiment (see e.g.~Refs.~\cite{SpekkensPOM, HT17, MP16}). In our scenario, states are sampled from an ensemble and communicated sequentially between independent observers, each of whom performs a measurement with the aim of obtaining preparation contextual outcome statistics. We show that preparation contextuality can be shared between any number of sequential observers. Furthermore, we show that the sharing is robust to noise, in the sense that for any given number of independent observers and exposure to any nontrivial amount of white noise, one can find an ensemble whose contextuality can be shared between all the observers. We proceed to experimentally demonstrate the sharing of preparation contextuality. We realise a four-observer scenario in which the first observer prepares an optical qubit ensemble and the remaining three observers perform sequential unsharp (non-maximally disturbing) measurements. Thus, we obtain three shared demonstrations of preparation contextuality.

\section{Nonclassicality via preparation contextuality}
The impossibility of describing the set of observables in quantum theory by underlying classical (noncontextual) quantities originates in the arguments of Bell, Kochen and Specker \cite{KochenSpecker}. More recently, the notion of contextuality has seen a generalisation  formulated in operational terms (i.e., in terms of probabilities) applying to measurements, transformations and preparations \cite{Sp05}. Here, we are interested in contextuality in terms of preparations.

The predictions of an operational theory (e.g. quantum theory) may be explained by an ontological model \cite{HS10}. An ontological model ascribes a set $\Lambda$ of ontic (objective) states $\lambda$ to each physical system S. A particular preparation $P$ of the system is  associated to a distribution $\mu_P(\lambda)$ over the ontic state space. Similarly, the probability of outcome $b$ of a measurement $M$ is described by a response function  $\xi_{b,M}(\lambda)$. The ontological model thus seeks a $\mu$ and a $\xi$ to explain the observed statistics by $p(b|P,M)=\int_\Lambda \mu_P(\lambda)\xi_{b,M}(\lambda)d\lambda$. Note that such an ontological model only concerns the observed probabilities and does not need to explicitely reference an underlying physical theory. The model is said to be \textit{preparation noncontextual} if two different preparations $P$ and $P'$ that cannot be distinguished by the statistics generated by any measurement (that is; $\forall M: p(b|P,M)=p(b|P',M)$) are associated to the \textit{same} distribution over ontic states, i.e., $\mu_P=\mu_{P'}$. If observed statistics falsify this assumption, the it is said to be preparation  \textit{contextual}. Quantum state ensembles are known to enable preparation contextuality. 

In order to prove preparation contextuality, it is sufficient to violate an inequality bounding the correlations attainable by any preparation noncontextual model. We focus on a family of such inequalities introduced in Ref.~\cite{SpekkensPOM} related to a variant of Random Access Coding \cite{Ambainis, RAC}. Consider a party Alice receiving a random input string $x=x_1\ldots x_n\in\{0,1\}^n$. Her input is associated to a preparation $P_x$ (one of $2^n$ possible) which is sent to a receiver Bob. Her preparations are constrained to satisfy certain indistinguishability relations: there must exist no measurement that can reveal any information about the parity of the string $r\cdot x$ for every $r\in\{0,1\}^n$ with $|r|\geq 2$.  Bob receives a random input $y\in\{1,\ldots,n\}$, and performs a measurement $\{M_y^b\}$ with outcome $b\in\{0,1\}$. The partnership is awarded a point if the outcome of Bob coincides with the $y$th entry in Alice's string. In any preparation noncontextual theory, the probability of winning obeys the following bound \cite{SpekkensPOM}:
\begin{equation}\label{ineq}
\mathcal{A}^{(n)}\equiv \frac{1}{n2^n}\sum_{x,y} p(b=x_y\lvert x,y)\leq \frac{n+1}{2n}.
\end{equation}
Due to the contextual nature of quantum theory, these inequalities can be violated. Maximal quantum violations for any $n\geq 2$ are known \cite{IndexGame}. Bob performs dichotomic measurements characterised by an observable $G_{n,y}^T$, where $T$ denotes transpose. These are recursively defined from $G_{2,1}=\sigma_x$, $G_{2,2}=\sigma_y$, and $G_{3,1}=\sigma_x$, $G_{3,2}=\sigma_y$ and $G_{3,3}=\sigma_z$, and
\begin{align}\nonumber\label{optmeas}
& \text{$n$ even:} & G_{n,k}=G_{n-1,k}\otimes \sigma_x \hspace{5mm}  \forall k\in\{1,\ldots, n-1\}, \\
& \text{$n$ odd:}  &  G_{n,k}=G_{n-2,k}\otimes \sigma_x  \hspace{5mm}  \forall k\in\{1,\ldots, n-2\}
\end{align}
with $ G_{n,n}=\openone\otimes \sigma_y$ if $n>3$ is even, and $ G_{n,n}=\openone\otimes \sigma_z$ and $ G_{n,n-1}=\openone\otimes \sigma_y$ if $n>3$ is odd. Note that the dimension of $G_{n,k}$ is $2^{\lfloor n/2\rfloor}$. The optimal preparations are states of $\lfloor n/2 \rfloor$ qubits specified by
\begin{equation}\label{optstate}
\rho_x=\Tr_\text{A}\left[\left(\openone + A_x\right)\otimes \openone \left(\ketbra{\phi_{\text{max}}}{\phi_{\text{max}}}\right)^{\otimes\lfloor n/2\rfloor}\right],
\end{equation}
where $A_x=\frac{1}{\sqrt{n}}\sum_{i=1}^{n}(-1)^{x_i}G_{n,i}$, $\ket{\phi_{\text{max}}}=\left(\ket{0,0}+\ket{1,1}\right)/\sqrt{2}$, and the trace is taken over the first system in every entangled pair. Note that Alice's preparations are single quantum systems, and only for simplicity written in terms of post-measurement states of a collection of entangled states. The presented strategy leads to the maximal quantum value $\mathcal{A}^{(n)}=1/2(1+1/\sqrt{n})$ for every $n$ \cite{IndexGame}.

\section{Sequential scenario} We consider a scenario in which the ability to violate the inequality \eqref{ineq} is shared between many independent observers, named Bob$_1$,..., Bob$_m$, each of whom receive an independent random input $y_k\in\{1,\ldots,n\}$ and output $b_k\in\{0,1\}$. Alice's randomly chosen preparation is sent to Bob$_1$ who performs a measurement and passes the post-measurement state to Bob$_{2}$ who performs a measurement and passes the post-measurement state to Bob$_3$ etc. The scenario is illustrated in Fig.~\ref{FigScenario}. The pair Alice-Bob$_k$ uses the marginal distribution $p(b_k|x,y_k)$ to compute the witness \eqref{ineq} (here labelled $\mathcal{A}_k^{(n)}$) to check for preparation contextuality.  
\begin{figure}
	\includegraphics[width=\columnwidth]{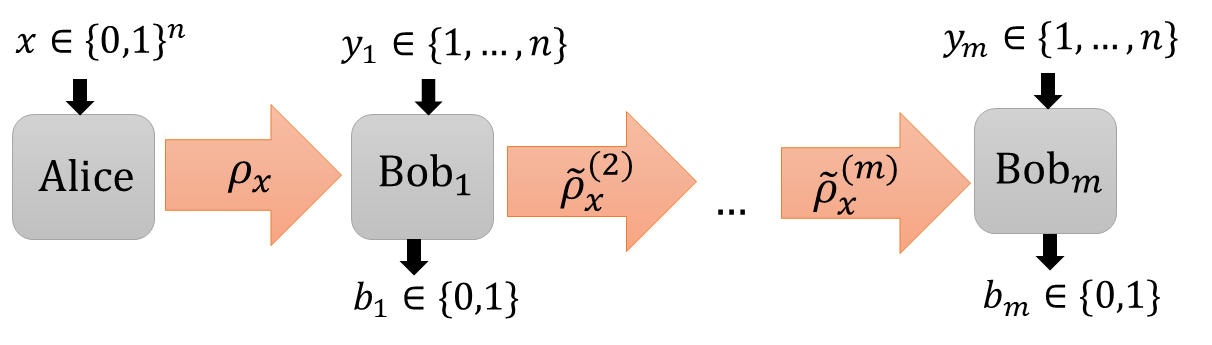}
	\caption{Alice's preparations are sent from one observer to the next, each performing a measurement aiming to independently reveal preparation contextual statistics. To this end, only the average post-measurement state $\tilde{\rho}_x^{(k)}$ is relevant.}\label{FigScenario}
\end{figure}

In a quantum approach, we may denote Alice's preparations by $\rho_x$ which must satisfy the indistinguishability relation $\sum_{r\cdot x=0} \rho_x=\sum_{r\cdot x=1}\rho_x$ for every string $r$ with $|r|\geq 2$. Since one has to keep track of both the statistics and the post-measurement states of each Bob, we require the detailed set of Kraus operators for each measurement. By $K_{y_k}^{b_k}$ we denote the Kraus operators of Bob$_k$ associated to the $y_k$th measurement and $b_k$th outcome. The state received by Bob$_k$ is specified by Alice's input $x$, and the strings of inputs $(y_1,\ldots,y_{k-1})$ and outputs $(b_1,\ldots, b_{k-1})$ of all previous Bobs. However, we treat each Bob in the sequence as independent from the rest, meaning that they do not know the specific inputs or outputs of the other Bobs in each run of the experiment. Thus, in order to calculate the relevant marginal distributions $p(b_k|x,y_k)$, only the average state  $\tilde{\rho}_x^{(k)}$ received by Bob$_k$ is required, i.e.,  the state obtained from averaging a preparation $\rho_x$ of Alice over all the inputs and outputs of all previous Bobs:
\begin{equation}
\tilde{\rho}_x^{(k)}=\frac{1}{n}\sum_{y_{k-1},b_{k-1}} K_{y_{k-1}}^{b_{k-1}}\tilde{\rho}_x^{(k-1)}(K_{y_{k-1}}^{b_{k-1}})^\dagger, 
\end{equation}
with $\tilde{\rho}_x^{(1)}=\rho_x$. Consequently, the desired marginal statistics for Bob$_k$ are $p(b_k|x,y_k)=\Tr\left(\tilde{\rho}_x^{(k)} (K_{y_k}^{b_k})^\dagger K_{y_k}^{b_k}\right)$. This constitutes a description of general quantum strategies in the sequential scenario. 

\subsection{Sharing preparation contextuality} We apply the above general description to construct a specific family of quantum strategies for sharing preparation contextuality, that is inspired by the previously described optimal quantum strategy for the maximal violation of the  inequalities \eqref{ineq}. Alice prepares the states \eqref{optstate} while each Bob performs an unsharp variant of the measurements optimal for violating \eqref{ineq}. In that strategy the measurements of Bob are the dichotomic observables $G_{n,y_k}^T$ defined in \eqref{optmeas}, corresponding to the projectors $\Pi_{n,y}^{b}=(\openone +(-1)^b G_{n,y}^T)/2$ that are both the Kraus operators and POVM elements. For a weaker measurement, one modifies the POVM element to $(\openone +(-1)^b \;\eta_k\; G_{n,y}^T)/2$, for some $\eta_k\in[0,1]$. If $\eta_k=1$ ($\eta_k=0$), the measurement is sharp (non-interacting). Choosing $0<\eta_k<1$ corresponds to an unsharp measurement. The corresponding Kraus operator is given by
\begin{equation}\label{Kraus}
K_{y_k}^{b_k}=\sqrt{\frac{1+\eta_k}{2}} \Pi_{n,y_k}^{b_k}+\sqrt{\frac{1-\eta_k}{2}} \Pi_{n,y_k}^{\bar{b}_k},
\end{equation}
where the bar-sign denotes a bit-flip. This class of strategies has the following convenient property.
\begin{lemma}
If Alice prepares the states in Eq.~\eqref{optstate} and the Bobs each measure $G^T_{n,y_k}$ with sharpness $\eta_k$, the average state received by Bob$_k$ is 
\begin{equation}\label{lemma}
\tilde{\rho}_x^{(k)}=v_k\rho_x+\left(1-v_k\right)\rho_{\text{mix}},
\end{equation}     
where $\rho_{\text{mix}}$ is the maximally mixed state and the visibility $v_k\in [0,1]$ is given recursively by
\begin{equation}\label{visi}
v_k = v_{k-1} f_{k-1} = \prod_{j=1}^{k-1} f_j,
\end{equation}
where $v_1=1$ by definition, and the ``quality factor" $f_k$ of the measurement of Bob$_k$ is defined from the sharpness $\eta_k$ as $f_k = (1+(n-1)\sqrt{1-\eta_k^2})/n$.
\end{lemma}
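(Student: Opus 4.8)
The plan is to proceed by induction on $k$, with the base case $k=1$ immediate since $v_1=1$ and $\tilde\rho_x^{(1)}=\rho_x$. The defining average of $\tilde\rho_x^{(k)}$ in terms of $\tilde\rho_x^{(k-1)}$ is a fixed, linear, completely positive and trace-preserving map, which I will call $\Lambda_{\eta_{k-1}}$; by linearity it suffices to determine how $\Lambda_\eta$ acts on the two components of the inductive hypothesis $\tilde\rho_x^{(k-1)}=v_{k-1}\rho_x+(1-v_{k-1})\rho_{\text{mix}}$. I will establish two facts: that $\rho_{\text{mix}}$ is a fixed point of every $\Lambda_\eta$, and that $\Lambda_\eta(\rho_x)=f\,\rho_x+(1-f)\rho_{\text{mix}}$ with $f$ the quality factor. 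Granting these, $\tilde\rho_x^{(k)}=v_{k-1}f_{k-1}\rho_x+(1-v_{k-1}f_{k-1})\rho_{\text{mix}}$, and reading off the coefficient of $\rho_x$ gives exactly $v_k=v_{k-1}f_{k-1}$, closing the induction.

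First I would recast Alice's preparations in Bloch form. Using the standard identity $\Tr_\text{A}[(M\otimes\openone)\phi_{\text{max}}^{\otimes\lfloor n/2\rfloor}]=M^T/d$ for the maximally entangled state of local dimension $d=2^{\lfloor n/2\rfloor}$, Eq.~\eqref{optstate} becomes $\rho_x=\rho_{\text{mix}}+\tfrac1d A_x^T$ with $A_x^T=\tfrac{1}{\sqrt n}\sum_i(-1)^{x_i}G_{n,i}^T$, which is also what motivates each Bob measuring the transposed observables. Writing the Kraus operator \eqref{Kraus} in the compact form $K_y^b=p\,\openone+(-1)^b q\,G_{n,y}^T$, where $p=(\alpha+\beta)/2$ and $q=(\alpha-\beta)/2$ are built from $\alpha=\sqrt{(1+\eta)/2}$, $\beta=\sqrt{(1-\eta)/2}$, a direct expansion of $K_y^b M(K_y^b)^\dagger$ shows the mixed term is odd in $(-1)^b$ and cancels upon summing over $b\in\{0,1\}$. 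This leaves $\sum_b K_y^b M K_y^b=\tfrac{1+s}{2}M+\tfrac{1-s}{2}\,G_{n,y}^T M G_{n,y}^T$ with $s=\sqrt{1-\eta^2}$. Taking $M=\openone$ and averaging over $y$ confirms $\Lambda_\eta(\openone)=\openone$, so $\Lambda_\eta$ is unital and fixes $\rho_{\text{mix}}$.

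It then remains to evaluate $\Lambda_\eta(G_{n,i}^T)=\tfrac{1+s}{2}G_{n,i}^T+\tfrac{1-s}{2}\cdot\tfrac1n\sum_y G_{n,y}^T G_{n,i}^T G_{n,y}^T$. The essential structural input is that the observables $G_{n,1},\dots,G_{n,n}$ pairwise anticommute and square to $\openone$, a property preserved under transposition. I would prove this by induction using the recursive construction \eqref{optmeas}: the base cases $n=2,3$ are the Pauli matrices, tensoring every generator with $\sigma_x$ preserves both relations, and the newly appended generator $\openone\otimes\sigma_y$ (and $\openone\otimes\sigma_z$ for odd $n$) anticommutes with all the others because $\sigma_x$ anticommutes with $\sigma_y$ and $\sigma_z$. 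With this in hand, $G_{n,y}^T G_{n,i}^T G_{n,y}^T$ equals $G_{n,i}^T$ when $y=i$ and $-G_{n,i}^T$ otherwise, so $\tfrac1n\sum_y G_{n,y}^T G_{n,i}^T G_{n,y}^T=\tfrac{2-n}{n}G_{n,i}^T$. Substituting and simplifying the scalar $\tfrac{1+s}{2}+\tfrac{(1-s)(2-n)}{2n}=\tfrac{1+(n-1)s}{n}$ yields $\Lambda_\eta(G_{n,i}^T)=f\,G_{n,i}^T$; by linearity $\Lambda_\eta(A_x^T)=f\,A_x^T$, hence $\Lambda_\eta(\rho_x)=f\,\rho_x+(1-f)\rho_{\text{mix}}$ as required.

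The main obstacle is the anticommutation lemma for the $G_{n,y}$, since everything else is linear bookkeeping; once this Clifford-type algebra is secured, the factor $(2-n)/n$ and therefore the precise form of $f_k$ emerge automatically. A secondary point demanding care is the consistent handling of the transpose and the dimension $d=2^{\lfloor n/2\rfloor}$ throughout the partial-trace identity, together with verifying that the affine coefficients stay in $[0,1]$ so that each $\tilde\rho_x^{(k)}$ is a genuine state and $v_k\in[0,1]$ as claimed.
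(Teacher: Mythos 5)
Your proof is correct and is essentially the paper's own argument: an induction whose inductive step rests on the two facts that the sum over $b$ kills the cross terms in $K_y^b M K_y^b$ and that the anticommutation relations $\{G_{n,j},G_{n,k}\}=2\delta_{j,k}\openone$ give $\frac{1}{n}\sum_y G_{n,y}A_xG_{n,y}=\frac{2-n}{n}A_x$, from which the quality factor $f_k=\left(1+(n-1)\sqrt{1-\eta_k^2}\right)/n$ follows by the same scalar arithmetic. The only differences are presentational: you collapse the partial trace at the outset via $\Tr_\text{A}\left[(M\otimes\openone)\,\phi_{\text{max}}^{\otimes\lfloor n/2\rfloor}\right]=M^T/2^{\lfloor n/2\rfloor}$ and track the reduced (Bloch-form) states directly as fixed point plus eigenoperator of the unital channel, whereas the paper keeps the states in purified form and ricochets the Kraus operators across $\phi_{\text{max}}$ at each inductive step (the same identity in disguise), and you prove the anticommutation relations from the recursive construction where the paper simply cites them.
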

\begin{proof}
The proof is technical in character and is given in Appendix~\ref{AppendixProofLemma}.
\end{proof}

Using Eq.~\eqref{lemma}, the figure of merit \eqref{ineq} for the pair Alice and Bob$_k$ reads
\begin{equation}\label{figmerit}
\mathcal{A}_k^{(n)}=\frac{1}{2}\left(1+\frac{v_k\eta_k}{\sqrt{n}}\right).
\end{equation}
This leads to preparation contextuality whenever $\eta_k>1/(v_k\sqrt{n})$. This can be used to recursively calculate the critical pairs $(\eta_k,v_{k})$. Thusly, we arrive at the following result. 
\begin{result}\label{res1}
The number of observers who can independently share the preparation contextuality enabled by Alice's ensemble is at least $n$.  
\end{result}
\begin{proof}
Consider that each Bob tunes the sharpness of his measurement so as to just violate the  inequality \eqref{ineq}, but not more. Expressing the measurement sharpness $\eta_k=\sin\theta_k$, where $\theta_k \in [0,\pi/2]$, we thus require $\sin \theta_k = 1/(v_k\sqrt{n})$. On the other hand, a trivial lower bound on the quality factor of Bob$_k$'s measurement is $f_k = \left(1+(n-1)\cos \theta_k\right)/n\geq  \cos \theta_k$. Squaring, and using the expression for the critical value of $\sin\theta_k$ above, we find that $f_k^2\geq 1-1/(v_k^2n)$. Since the visibility of the next Bob is $v_{k+1}=v_kf_k$,  we have $v_{k+1}^2=v_k^2f_k^2\geq v_k^2\left(1-1/(v_k^2n)\right)$. Hence, the decrease in visibility from each Bob to the next is bounded by $v_k^2-v_{k+1}^2\leq 1/n$ which together with $v_1=1$ gives $v_{k+1}^2\geq 1-k/n$. This implies that the visibility of the $n$th Bob is at least $v_n\geq 1/\sqrt{n}$, which is precisely the condition for violating the preparation noncontextuality inequality.
\end{proof}

Thus by suitably choosing $n$, an arbitrary long sequence of observers can share the preparation contextual correlations enabled by Alice's ensemble. Moreover, we  now show that for the considered class of quantum strategies, the number of observers who share preparation contextuality can be no more than $n$.  Consider the quality factor $f_k$ of the the measurement of Bob$_k$. We can find upper and lower bounds on $f_k^2$ in the following manner. First, for a lower bound,
\begin{align}
f_k &= \frac{1 + (n-1) \cos \theta_k}{n} > \cos \theta_k, \\
\therefore f_k^2 &> \cos^2 \theta_k = 1 - \sin^2 \theta_k = 1 - \frac{1}{n v_k^2}.
\end{align}

For the upper bound,
\begin{multline}
f_k^2 < f_k^2 + 4 \frac{(n-1)}{n^2} \sin^4 \frac{\theta_k}{2} \\
= 1 - \frac{(n-1)}{n} \sin^2 \theta_k 
= 1 - \frac{(n-1)}{n^2 v_k^2}.
\end{multline}

But since the visibility $v_{k+1}$ of the next Bob is given by $v_{k+1} = v_k f_k$, we can bound the next visibility as
\begin{align}
v_k^2 \left( 1 - \frac{1}{n v_k^2} \right) < v^2_{k+1} < v_k^2 \left( 1 - \frac{(n-1)}{n^2 v_k^2} \right),
\end{align}
from which the decrease in the visibility squared is both bounded on both sides, by
\begin{align}
\frac{1}{n} < v_k^2 - v_{k+1}^2 < \frac{n-1}{n^2}.
\end{align}

Proceeding from the first Bob, who has visibility $v_1 = 1$, we can use the lower bound to find that $v_n^2 > 1/n$, and the upper bound to find that $v_{n+1}^2 < 1/n$. Since $1/\sqrt{n}$ is the critical visibility to violate the preparation noncontextuality inequality, it follows that Bob$_n$ can violate the inequality (as all of the Bobs before him), but that Bob$_{n+1}$ and later Bobs cannot.

Another noteworthy feature is that one can share preparation contextuality between any number of observers also in a scenario in which none of the Bob's knows his position in the sequence. To this end, consider a quantum strategy in which the set of possible measurements performed by each Bob is the same, i.e., they all perform equally unsharp measurements. If it is the case that the first $k$ Bobs in the sequence violate the preparation noncontextuality inequality, then the weakest violation will be by the last Bob. The condition for the $k$'th Bob to just saturate the preparation noncontextuality inequality reads
\begin{align}
& \sin \theta=\frac{1}{v_{k-1}\sqrt{n}},
\end{align}
where $\eta =\sin \theta$ is the strength of all of the Bobs' measurements, and the visibility is given by,
\begin{equation}
v_{k-1}=\left(\frac{1+(n-1)\cos \theta}{n}\right)^{k-1}.
\end{equation}
Solving the equation for the value of $k$ returns
\begin{equation}\label{eq:temp1}
k=1-\frac{\log \sin \theta+\frac{1}{2}\log n}{\log \left(1+(n-1)\cos\theta\right)-\log n}.
\end{equation}

Consider that the strength of the measurement is chosen to be $\sin\theta = \sqrt{e/n}$ (where $n \geq 3$). In any case, we are interested in the scaling for large $n$, for which we may approximate $\cos \theta = \sqrt{ 1 - e/n } \approx 1 - e/2n$. Substituting this in the above, and further approximating $\log(1-x) \approx -x$ for $|x| \ll 1$, one gets
\begin{align}
k \approx 1 + \frac{n^2}{(n-1)e} = O \left( \frac{n}{e} \right).
\end{align}
One can show that this is the optimal scaling by the Maclaurin expansion of \eqref{eq:temp1} for small $\theta$, and differentiating to find the optimal value of $\theta$.

Thus we find that even in the anonymous setting where each Bob is unaware of their position in the sequence, the maximum number of observers able to share the contextuality enabled by Alice's ensemble by all performing equally unsharp measurements scales as
\begin{equation}
k_{\text{max}}\approx \frac{n}{e}.
\end{equation}
Note that this scaling is the same as obtained in the non-anonymous setting, up to a pre-factor of $1/e$.

\subsection{Noise-robustness} The scenario we have considered so far is an idealisation in which no noise appears. In addition to this not being realistic in any experiment, it is interesting to consider whether the noiseless scenario is distinctive, or also significantly noisy ensembles \cite{footnote} enable shared preparation contextuality. To address this matter, we let Alice's preparations be mixtures of the intended state $\rho_x$ with the maximally mixed state: $\rho_x(q) =q \rho_x+(1-q)\rho_{\text{mix}}$ for some visibility $q\in[0,1]$. For a given number of observers, what is the smallest $q$  such that preparation contextuality can be shared between all observers? 
\begin{result}
	For any given number of independent observers $m$, there exists an ensemble whose contextuality can be shared between all observers for any  $q>0$.
\end{result}
\begin{proof}
The proof follows the same steps as that of Result~\ref{res1}. We substitute $\rho_x$ for $\rho_x(q)$ in the proof of Result~\ref{res1}, which means that  $v_1=q$. Following the same procedure, one obtains the following condition for the visibility in the sequence: $v_{k+1}^2\geq q-k/n$. This implies that to find a violation for the $m$'th Bob, one must choose a sufficiently large $n$, namely $n\geq \lceil \frac{m}{q}\rceil$.
\end{proof}

Hence, preparation contextuality can be shared between any number of observers using ensembles with an arbitrarily large noise-component by choosing a sufficiently large $n$. The price to pay for this property is that when $q\rightarrow 0$, both the Hilbert space dimension of Alice's ensemble and the number of preparations and measurements diverge.

\begin{figure*}
	\includegraphics[width=\textwidth]{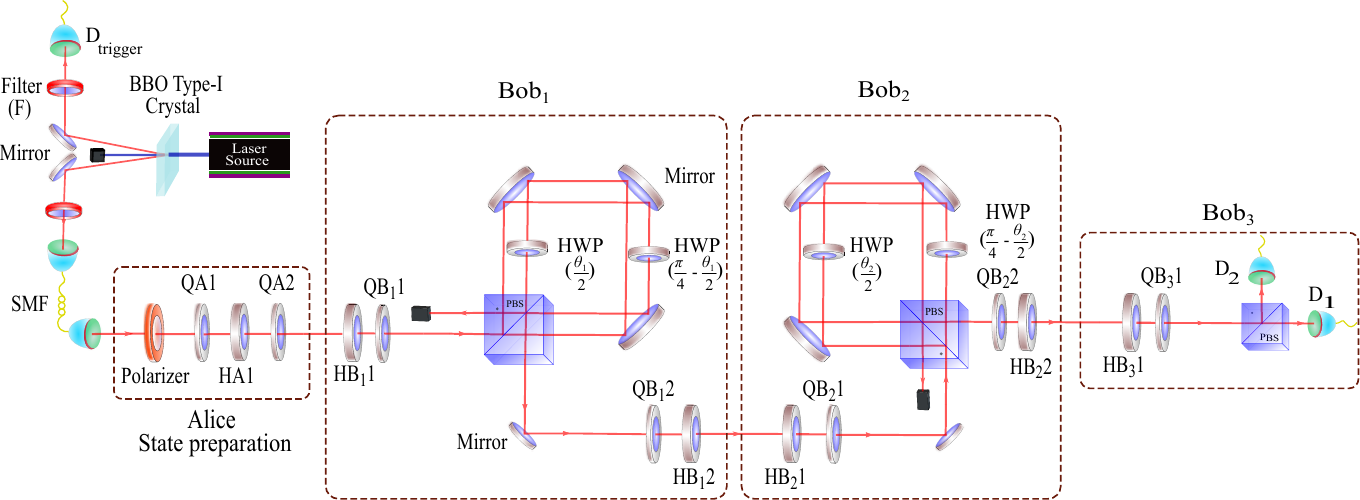}
	\caption{Optical set-up used to reveal contextuality sharing. See text for details. Q and H represent quarter-wave plates (QWPs) and half-wave plates (HWPs).}\label{FigSetup}
\end{figure*}

\section{Experiment}
 We demonstrate the theoretical findings in an experiment with three ($n=3$) sequential tests of preparation contextuality. Alice prepares the eight qubit states \eqref{optstate} with Bloch vectors $\vec{a}_x = \left[ (-1)^{x_1}, (-1)^{x_2},(-1)^{x_3} \right]/\sqrt{3}$. Bob$_1$ and Bob$_2$ perform unsharp  measurements (\ref{Kraus}) of $\sigma_x$, $\sigma_y$ and $\sigma_z$ whereas Bob$_3$ performs projective (sharp) measurements of the same observables.

\par In the experiment we peform unsharp measurements on the polarisation state of a single photon using shifted Sagnac interferometers, as shown in Bob$_1$ and Bob$_2$ in Fig. (\ref{FigSetup}). A HWP is placed in each path of the interferometer, rotated to $\theta_i/2$ in the horizontal path and $\pi/4-\theta_i/2$ in the vertical path to control the sharpness of the measurement. A HWP and QWP before and after the interferometer are used to select the basis of the measurement. The measurement outcome is encoded in the output path, i.e. outcome $b_i=0$ ($b_i=1$) corresponds to the detection of the photon in output path 1 (2, beam blocked in figure). In the sequential scenario we choose to consider only one path at a time for Bob$_1$ and Bob$_2$ to simplify the set-up. By adding an additional rotation to the HWPs or QWPs before and after Bob, we can select the output we want to analyse \cite{Exp1, Exp2}. The results of Bob$_1$ and Bob$_2$'s unsharp measurements are therefore obtained at Bob$_3$, comprised of a PBS and single photon detectors D$_1$ and D$_2$. For example, if we consider output 1 at Bob$_1$ and Bob$_2$, a click in either detector at Bob$_3$ tells us that Bob$_1$ and Bob$_2$ had the outcome $b_1=0$ and $b_2=0$. This is, however, not a limitation since we can equally well have Bob$_1$ and Bob$_2$ read out their respective outputs by placing an additional single photon detector in their respective labs to detect the photon path.  We analyse the counts in Bob$_3$ corresponding to all possible combinations of output ports to realise a full measurement. This protocol relies on a stable photon generation rate. Details of measurement angles are given in Appendix~\ref{ExperimentalDetails}. This set-up can be used to perform projective measurements ($\eta=1$, $\theta_i=0$), no measurement ($\eta=0$, $\theta_i=\pi/4$), or an intermediate-strength measurement, where the the sharpness (strength) of the measurement is tuned  by varying $\theta_i$.

\par The full set-up is shown in Fig.~\ref{FigSetup}. We generate heralded single photons at 780 nm via spontaneous parametric down-conversion (SPDC) using a single type-I beta barium borate (BBO) crystal of thickness 2 mm pumped by 390 nm femto-second laser pulses. The idler photon is detected by an APD single-photon detector, D$_{\mathrm{trigger}}$, and is used as a trigger. The single photons are  coupled into single-mode fibres (SMF) after passing through a narrowband 3 nm interference filter (F) to define the spatial and spectral properties of the  photons. After filtering, the signal photon is prepared into one of Alice's eight states, using a polariser, two QWPs and a HWP (angles given in Appendix~\ref{ExperimentalDetails}). The unsharp measurements of Bob$_1$ and Bob$_2$ correspond to $\theta_1=24.95\degree$ ($\eta_1=0.6441$) and $\theta_2=20.10\degree$ ($\eta_2=0.7637$) respectively, which ideally produce $\mathcal{A}_1=\mathcal{A}_2=\mathcal{A}_3=0.6859>2/3$ with $\mathcal{A}_k=\mathcal{A}_k^{(3)}$.

\subsection{Results}
In order to test each of the three preparation noncontextuality inequalities (between Alice and each of the three Bobs), we require 24 marginal probabilities (the `winning' answers $b_k=x_{y_k}$) corresponding to the three measurement bases and Alice's eight preparations. To reduce the Poissonian error, each Bob collects approximately 34 million counts for each of these 24 settings. Our experimental values can be found in Appendix~\ref{ExperimentalResults}. These lead to three preliminary values of $ \mathcal{A}_1^{\text{pre}}=0.687\pm0.001$, $ \mathcal{A}_2^{\text{pre}}=0.675\pm0.001$, and $\mathcal{A}_3^{\text{pre}}=0.681\pm0.001$.

\subsection{Data analysis} Due to small yet unavoidable experimental imperfections, e.g. waveplate imperfections and offsets in the rotation of the waveplates, it is impossible to perfectly satisfy the operational indistinguishability relations required to test preparation contextuality. This problem can be overcome by suitable post-processing methods \cite{MP16}. As described in Appendix~\ref{AppData}, we have used a relaxed variant of these methods to enforce the indistinguishability relations relevant to a test of inequality \eqref{ineq} on our experimental data. This comes at the cost of the observed values $\left(\mathcal{A}_1^{\text{pre}},\mathcal{A}_2^{\text{pre}},\mathcal{A}_3^{\text{pre}}\right)$ decreasing in a manner corresponding to how well the statistics approximates said relations. Due to the high visibility and precision of the experimental set-up, we find only a small decrease in the three correlation witnesses: 
\begin{align*}
	& \mathcal{A}_1^{\text{post}}=0.683\pm0.001\\
	& \mathcal{A}_2^{\text{post}}=0.670\pm0.001\\
	& \mathcal{A}_3^{\text{post}}=0.677\pm0.001
\end{align*}
all of which violate inequality \eqref{ineq}.

\section{Conclusions} We have theoretically developed and  experimentally demonstrated the sharing of preparation contextual correlations in scenarios that require no entanglement. In addition to such correlations being possible to share between any number of observers, we found that this can be done in a strongly noise-robust manner. This distinguishes shared preparation contextuality from known results in e.g.~shared Bell nonlocality in which the fragility to noise of sequential demonstrations scales super-exponentially \cite{SG16}. This fragility poses a significant experimental hurdle and has hitherto limited demonstrations to two sequential violations of Bell inequalities \cite{Exp1, Exp2}. We experimentally observed three sequential demonstrations of preparation contextuality. Optical set-ups of this spirit (see also Refs~\cite{Exp1, Exp2}) are promising candidates for a variety of sequential correlation tests. Finally, an interesting question is to understand which forms of quantum correlations can be shared between indefinitely many observers in a noise-robust manner.

\textit{Note added.---} In the substantial time between the arXiv submission of this work and its publication, also other works concerning the sequential sharing of contextuality have appeared \cite{Hierarchy2, Pan}.

\section{Acknowledgements}
H. A. and N. W. contributed equally to this work. This work was supported by the project ``Photonic Quantum Information'' (Knut and Alice Wallenberg
Foundation, Sweden), the Swedish Research Council, Stiftelsen Olle Engkvist Byggm{\"a}stare and the Swiss National Science Foundation (Starting grant DIAQ, Early PostDoc Mobility fellowship P2GEP2 194800, and the National Centre of Competence in Research ``Quantum Systems and Technology'').

\appendix
\onecolumngrid

\section{Proof of Lemma}\label{AppendixProofLemma}
In this section, we prove the lemma of the main text. In the considered scenario, Alice receives a random input $x\in\{0,1\}^n$ and prepares the associated state 
\begin{align}
\rho_x=\Tr_\text{A}\left[\left(\openone + A_x\right)\otimes \openone \left(\ketbra{\phi_{\text{max}}}{\phi_{\text{max}}}\right)^{\otimes\lfloor n/2\rfloor}\right],
\end{align}
where $\left(\ketbra{\phi_{\text{max}}}{\phi_{\text{max}}}\right)^{\otimes\lfloor n/2\rfloor}$ is ${\lfloor n/2\rfloor}$ copies of the two-qubit maximally entangled state, and the partial trace is taken over all the first qubits in each pair. Consider that the sequence of Bobs, labelled by $\{1,2,...,m-1\}$, apply measurements of intermediate sharpness to the state above, each denoted by $\eta_k = \sin \theta_k$. We proceed to prove that the average state $\tilde{\rho}_x^{(m)}$ received by Bob$_m$ will be of the form
\begin{align}\label{eq:mthstate}
\tilde{\rho}^{(m)}_x&=\Tr_\text{A}\left[\left(\openone + v_m A_x\right)\otimes \openone \left(\ketbra{\phi_{\text{max}}}{\phi_{\text{max}}}\right)^{\otimes\lfloor n/2\rfloor}\right],
\end{align}
where $v_m$ (the ``visibility" of the state) is given by
\begin{align}
v_m &= v_{m-1} f_{m-1} = \prod_{j=1}^{m-1} f_j, \label{eq:mthvisibility}\\
\text{where} \quad f_j &= \frac{1 + (n-1) \cos \theta_j}{n}. \label{eq:mthquality}
\end{align}
We call $f_j$ the ``quality factor" of the measurement of the $j^{th}$ Bob. The visibility of the first Bob is $v_1 = 1$, since he possesses the undisturbed state received directly from Alice.

The proof is inductive. For the first Bob, the statement holds trivially. Consider that it holds true for $m-1$ Bobs, so that the average state $\tilde{\rho}_x^{(m)}$ received by Bob$_{m}$ is given by \eqref{eq:mthstate}. Then using the Kraus operators stated in the main text, the average state $\tilde{\rho}_x^{(m+1)}$ (averaging over all Bob$_{m}$'s possible and equiprobable inputs, and with no knowledge of his outcome), is given by
\begin{align}
\tilde{\rho}_x^{(m+1)} &= \frac{1}{n}\sum_{y,b}K_y^b \tilde{\rho}_x^{(m)} (K_y^b)^\dagger = \frac{1}{n} \sum_{y,b} \Tr_\text{A} \left[ \left(\openone + v_m A_x\right)\otimes K^b_y \; \left(\ketbra{\phi_{\text{max}}}{\phi_{\text{max}}}\right)^{\otimes\lfloor n/2\rfloor} \; \openone \otimes (K_y^b)^\dagger \right],
\end{align}
where the Kraus operators are acting on the part of the Hilbert space complementary to that being traced out. First, using the property of the maximally entangled state that  $\left( \openone \otimes O \right) \ket{\phi_\text{max}}\bra{\phi_\text{max}} \left( \openone \otimes O^\dagger \right) = \left( O^T \otimes \openone \right) \ket{\phi_\text{max}}\bra{\phi_\text{max}} \left( O^*\otimes \openone\right)$, and then using the cyclicity of the trace, we obtain
\begin{align}
\tilde{\rho}_x^{(m+1)} &= \frac{1}{n} \sum_{y,b} \Tr_\text{A} \left[ \left(\openone + v_m A_x\right) (K^b_y)^T \otimes \openone \; \left(\ketbra{\phi_{\text{max}}}{\phi_{\text{max}}}\right)^{\otimes\lfloor n/2\rfloor} \; (K^b_y)^{\dagger T} \otimes \openone \right] \\
&= \frac{1}{n} \sum_{y,b} \Tr_\text{A} \left[ (K^b_y)^{\dagger T} \left( \openone + v_m A_x \right) (K^b_y)^T \otimes \openone \; \left(\ketbra{\phi_{\text{max}}}{\phi_{\text{max}}}\right)^{\otimes\lfloor n/2\rfloor} \right].\label{eq:blah0}
\end{align}

Splitting the above into the sum of the two terms from the $(\openone + v_m A_x)$, the contribution of the $\openone$ part is
\begin{multline}
\frac{1}{n} \sum_{y,b} \Tr_\text{A} \left[ (K^b_y)^{\dagger T} (K^b_y)^T \otimes \openone \; \left(\ketbra{\phi_{\text{max}}}{\phi_{\text{max}}}\right)^{\otimes\lfloor n/2\rfloor} \right] = \frac{1}{n} \sum_{y} \Tr_\text{A} \left[ \openone \otimes \openone \; \left(\ketbra{\phi_{\text{max}}}{\phi_{\text{max}}}\right)^{\otimes\lfloor n/2\rfloor} \right] \\
= \Tr_\text{A} \left[ \left(\ketbra{\phi_{\text{max}}}{\phi_{\text{max}}}\right)^{\otimes\lfloor n/2\rfloor} \right], \label{eq:blah1}
\end{multline}
where we have used that the $K^b_y$ are Hermitian and that measurements are complete i.e., $\sum_b  (K^b_y)^{\dagger T} (K^b_y)^T = \openone$.

For the term involving $A_x$, we calculate the sum using the Kraus operators from the main text, denoting by $\eta_m = \sin \theta_m$ the strength of the measurement of Bob $m$,
\begin{equation}
K_{y}^{b}=\sqrt{\frac{1+\eta_m}{2}} \Pi_{n,y}^{b}+\sqrt{\frac{1-\eta_m}{2}} \Pi_{n,y}^{\bar{b}} = \left( \frac{\ctm \openone + (-1)^b \stm G^{\red{T}}_{n,y}}{\sqrt{2}} \right),
\end{equation}
which results in
\begin{align}\nonumber
\frac{1}{n} \sum_{y,b} (K^b_y)^{\dagger T} A_x (K^b_y)^T &= \frac{1}{n}\sum_{y,b}  \left( \frac{\ctm \openone + (-1)^b \stm G_{n,y}}{\sqrt{2}} \right) A_x \left( \frac{\ctm \openone + (-1)^b \stm G_{n,y}}{\sqrt{2}} \right) \\\nonumber
&= \frac{1}{2n} \sum_{y,b} \cos^2 \left(\frac{\theta_m}{2}\right) A_x + (-1)^b \cos\left(\frac{\theta_m}{2}\right)\sin \left(\frac{\theta_m}{2}\right) \left\{G_{n,y}, A_x \right\} + \sin^2\left(\frac{\theta_m}{2}\right) G_{n,y} A_x G_{n,y}. \\\nonumber
&= \frac{1}{n} \sum_y \left( \frac{1 + \cos\theta_m}{2} \right) A_x + \left( \frac{1 - \cos\theta_m}{2} \right) G_{n,y} A_x G_{n,y} \\
&= \left( \frac{1 + \cos\theta_m}{2} \right) A_x + \left( \frac{1 - \cos\theta_m}{2} \right) \frac{1}{n} \sum_y G_{n,y} A_x G_{n,y}. \label{eq:blah2}
\end{align}
We may now use the expansion $A_x = \frac{1}{\sqrt{n}} \sum_{i} (-1)^{x_i} G_{n,i}$, and the anti-commutation relation $\{G_{n,j},G_{n,k}\} = 2\delta_{j,k} \openone$ from \cite{IndexGame} to simplify
\begin{align}\nonumber
\frac{1}{n} \sum_y G_{n,y} A_x G_{n,y} &= \frac{1}{\sqrt{n}} \sum_i (-1)^{x_i} \frac{1}{n} \sum_y G_{n,y} G_{n,i} G_{n,y} \\ \nonumber
&= \frac{1}{\sqrt{n}} \sum_i (-1)^{x_i} \frac{1}{n} \sum_y \left( 2\delta_{i,y}G_{n,y}-G_{n,i} \right) \\\nonumber
&= \frac{1}{\sqrt{n}} \sum_i (-1)^{x_i} \frac{1}{n} (2-n) G_{n,i} \\
&= \frac{2-n}{n} A_x.
\end{align}

Inserting this into Eq.~\eqref{eq:blah2}, we obtain
\begin{align}
\frac{1}{n} \sum_{y,b} (K^b_y)^{\dagger T} A_x (K^b_y)^T &= f_m A_x, \\
\text{where} \quad f_m &= \left( \frac{1 + (n-1) \cos \theta_m }{n} \right) = \left( \frac{1 + (n-1) \sqrt{ 1 - \eta^2 } }{n} \right),
\end{align}
is the quality factor of the measurement of Bob$_m$. Combining this with Eq.~\eqref{eq:blah1} to find the final expression for the average state after Bob$_m$'s measuremet Eq.~\eqref{eq:blah0}, we find
\begin{align}
\tilde{\rho}_x^{(m+1)} &= \Tr_\text{A} \left[ \left(\ketbra{\phi_{\text{max}}}{\phi_{\text{max}}}\right)^{\otimes\lfloor n/2\rfloor} \right] + \Tr_\text{A} \left[ v_m f_m A_x \otimes \openone \; \left(\ketbra{\phi_{\text{max}}}{\phi_{\text{max}}}\right)^{\otimes\lfloor n/2\rfloor} \right] \\
&= \Tr_\text{A} \left[ \left( \openone + v_m f_m A_x \right) \otimes \openone \; \left(\ketbra{\phi_{\text{max}}}{\phi_{\text{max}}}\right)^{\otimes\lfloor n/2\rfloor} \right],
\end{align}
which proves the desired relation (\eqref{eq:mthstate} - \eqref{eq:mthquality}).

\section{Experimental settings}\label{ExperimentalDetails}
The angles used for Alice's state preparation are given below:
\begin{table}[h!]
	\centering
	\begin{tabular}{|c|c|c|c|c|}
		\hline
		\bf{State} & \bf{Pol. ($\degree$)} & \bf{QA1 ($\degree$)} & \bf{HA1 ($\degree$)} & \bf{QA2 ($\degree$)} \\ \hline
		000 & 27.37 & 45 & -33.75 & 45 \\ \hline
		001 & 27.37 & 45 & -11.25 & 45 \\ \hline
		010 & 27.37 & 45 & -56.25 & 45 \\ \hline
		011 & 27.37 & 45 & -78.75 & 45 \\ \hline
		100 & 62.63 & 45 & -33.75 & 45 \\ \hline
		101 & 62.63 & 45 & -11.25 & 45 \\ \hline
		110 & 62.63 & 45 & -56.25 & 45 \\ \hline
		111 & 62.63 & 45 & -78.75 & 45 \\
		\hline
	\end{tabular}
	\caption{Angles for the polarizer, QWPs and HWP for the preparation of Alice's states.}
	\label{table:alicesettings}
\end{table}

The settings of the HWPs and QWPs used for the unsharp measurements in Bob$_1$ and Bob$_2$ are as follows. Note these settings are independent of the sharpness of the measurement, which is determined by the angle of the HWPs inside the interferometer.
\begin{table}[h!]
	\centering
	\begin{tabular}{|c|c|c|c|c|c|}
		\hline
		\bf{Measurement} & \bf{Output Port} & \bf{HB$_i$1 ($\degree$)} & \bf{QB$_i$1 ($\degree$)} & \bf{HB$_i$2($\degree$)} & \bf{QB$_i$2 $\degree$)} \\ \hline
		$\sigma_x$ & 1 & 22.5 & 0 & 90 & 22.5 \\ \hline
		$\sigma_x$ & 2 & 67.5 & 0 & 90 & 67.5 \\ \hline
		$\sigma_y$ & 1 & 0 & -45 & 45 & 0 \\ \hline
		$\sigma_y$ & 2 & 0 & 45 & 135 & 0 \\ \hline
		$\sigma_z$ & 1 & 0 & 0 & 90 & 0 \\ \hline
		$\sigma_z$ & 2 & 45 & 0 & 90 & 45 \\
		\hline
	\end{tabular}
	\caption{Waveplate settings for measurement and output selection of Bob$_1$ and Bob$_2$.}
	\label{table:messettings}
\end{table}

\pagebreak
\section{Experimental results}\label{ExperimentalResults}
The experimental marginal probabilities corresponding to the outcomes that satisfy $b_i=x_{y_i}$ (the `winning' answer in the communication game) for Bob$_1$ and Bob$_2$'s unsharp measurements and Bob$_3$'s projective measurements of $\sigma_x$, $\sigma_y$ and $\sigma_z$ on each of Alice's preparations are shown in the following three tables:
\begin{table}[h!]
	\centering
	\begin{tabular}{c|c|c|c|}
		\cline{2-4}
		& \multicolumn{3}{c|}{\bf{Bob$_1$}}\\
		\hline
		\multicolumn{1}{|c|}{\bf{State}} & \bf{$\sigma_x$} & \bf{$\sigma_y$} & \bf{$\sigma_z$} \\ \hline
		\multicolumn{1}{|c|}{000} & $0.7369 \pm 0.0003$ & $0.7044 \pm 0.0003$  & $0.6593 \pm 0.0002$ \\ \hline
		\multicolumn{1}{|c|}{001} & $0.6473 \pm 0.0002$ & $0.7257 \pm 0.0003$  & $0.7079 \pm 0.0003$ \\ \hline
		\multicolumn{1}{|c|}{010} & $0.6900 \pm 0.0003$ & $0.6727 \pm 0.0002$ & $0.6571 \pm 0.0002$ \\ \hline
		\multicolumn{1}{|c|}{011} & $0.6879 \pm 0.0003$ & $0.6501 \pm 0.0002$  & $0.7005 \pm 0.0003$ \\ \hline
		\multicolumn{1}{|c|}{100} & $0.6911 \pm 0.0003$ & $0.6195 \pm 0.0002$  & $0.7180 \pm 0.0003$ \\ \hline
		\multicolumn{1}{|c|}{101} & $0.6813 \pm 0.0003$ & $0.6464 \pm 0.0002$  & $0.6779 \pm 0.0003$ \\ \hline
		\multicolumn{1}{|c|}{110} & $0.6400 \pm 0.0002$ & $0.7471 \pm 0.0003$  & $0.7125 \pm 0.0003$ \\ \hline
		\multicolumn{1}{|c|}{111} & $0.7242 \pm 0.0003$ & $0.7132 \pm 0.0003$  & $0.6755 \pm 0.0002$ \\
		\hline
	\end{tabular}
	\caption{Experimental marginal probabilities for Bob$_1$.}
	\label{table:results1}
\end{table}

\begin{table}[h!]
	\centering
	\begin{tabular}{c|c|c|c|}
		\cline{2-4}
		& \multicolumn{3}{|c|}{\bf{Bob$_2$}}\\
		\hline
		\multicolumn{1}{|c|}{\bf{State}} & \bf{$\sigma_x$} & \bf{$\sigma_y$} & \bf{$\sigma_z$} \\ \hline
		\multicolumn{1}{|c|}{000} & $0.6997 \pm 0.0003$ & $0.6422 \pm 0.0002$ & $0.6851\pm 0.0003$ \\ \hline
		\multicolumn{1}{|c|}{001} & $0.6586 \pm 0.0002$ & $0.6785 \pm 0.0002$ & $0.6746\pm 0.0002$ \\ \hline
		\multicolumn{1}{|c|}{010} & $0.6537 \pm 0.0002$ & $0.7088 \pm 0.0003$ & $0.6715\pm 0.0002$ \\ \hline
		\multicolumn{1}{|c|}{011} & $0.6896 \pm 0.0003$ & $0.6824 \pm 0.0003$ & $0.6572\pm 0.0002$ \\ \hline
		\multicolumn{1}{|c|}{100} & $0.7106 \pm 0.0003$ & $0.6370 \pm 0.0002$ & $0.6775\pm 0.0002$ \\ \hline
		\multicolumn{1}{|c|}{101} & $0.6446 \pm 0.0002$ & $0.6792 \pm 0.0003$ & $0.6868\pm 0.0003$ \\ \hline
		\multicolumn{1}{|c|}{110} & $0.6553 \pm 0.0002$ & $0.7000 \pm 0.0003$ & $0.6752\pm 0.0002$ \\ \hline
		\multicolumn{1}{|c|}{111} & $0.6787 \pm 0.0002$ & $0.6666 \pm 0.0002$ & $0.6853\pm 0.0002$ \\
		\hline
	\end{tabular}
	\caption{Experimental marginal probabilities for Bob$_2$.}
	\label{table:results2}
\end{table}

\begin{table}[h!]
	\centering
	\begin{tabular}{c|c|c|c|}
		\cline{2-4}
		& \multicolumn{3}{|c|}{\bf{Bob$_3$}}\\
		\hline
		\multicolumn{1}{|c|}{\bf{State}} & \bf{$\sigma_x$} & \bf{$\sigma_y$} & \bf{$\sigma_z$} \\ \hline
		\multicolumn{1}{|c|}{000} & $0.7044\pm 0.0003$  & $0.6470\pm 0.0002$  & $0.6786\pm 0.0003$  \\ \hline
		\multicolumn{1}{|c|}{001} & $0.6661\pm 0.0002$  & $0.6886\pm 0.0003$  & $0.6854\pm 0.0003$  \\ \hline
		\multicolumn{1}{|c|}{010} & $0.6582\pm 0.0002$  & $0.7113\pm 0.0003$  & $0.6702\pm 0.0002$  \\ \hline
		\multicolumn{1}{|c|}{011} & $0.7011\pm 0.0003$  & $0.6783\pm 0.0003$  & $0.6746\pm 0.0003$  \\ \hline
		\multicolumn{1}{|c|}{100} & $0.6975\pm 0.0003$  & $0.6558\pm 0.0002$  & $0.6915\pm 0.0003$  \\ \hline
		\multicolumn{1}{|c|}{101} & $0.6655\pm 0.0003$  & $0.7049\pm 0.0003$  & $0.6891\pm 0.0003$  \\ \hline
		\multicolumn{1}{|c|}{110} & $0.6469\pm 0.0002$  & $0.6942\pm 0.0003$  & $0.6881\pm 0.0003$  \\ \hline
		\multicolumn{1}{|c|}{111} & $0.7027\pm 0.0003$  & $0.6512\pm 0.0002$  & $0.6853\pm 0.0003$  \\
		\hline
	\end{tabular}
	\caption{Experimental marginal probabilities for Bob$_3$.}
	\label{table:results3}
\end{table}

\pagebreak
\section{Enforcing strict operational equivalences on experimental data}\label{AppData}
Tests of preparation contextuality require that the observed probabilities satisfy an equivalence relation. In the specific preparation noncontextuality inequalities considered in the main text, that equivalence relation follows from the indistinguishability relation imposed on Alice's quantum preparations, i.e. that she hides  the value of the parity $r\cdot x$ for every string $r\in\{0,1\}^n$ with $|r|\geq 2$. This is an operational equivalence relation that is expressed in terms of probabilities as follows,
\begin{equation}\label{opeq}
\forall r, \forall M: \sum_{r\cdot x=0}p(P_x|b,M)=\sum_{r\cdot x=1}p(P_x|b,M).
\end{equation}
Evidently, due to unavoidable experimental imperfections, such a constraint can never be exactly satisfied. This necessitates data processing methods to contend with the problem. Ref.~\cite{MP16} developed a method for post-processing outcome statistics that approximately satisfies an operational equivalence constraint into data that strictly satisfies said constraint. The price to pay for this mapping is that the value of the witness after post-processing is worse than what is originally measured. Roughly speaking, the closer the unprocessed outcome statistics is to satisfying the operational equivalence constraint, the smaller the decrease in the witness value due to the post-processing scheme. 

We have applied a simplified variant (which assumes that the experiment is accurately described by quantum theory) of the method of \cite{MP16} to enforce operational equivalence in each of the three sequential tests of preparation contextuality. We describe how it applies to the experimental results of the pair Alice-Bob$_1$. Since the the outcomes are binary, the full distribution $p(b_1|x,y_1)$ can be described by only considering $p(b_1=0|x,y_1)$. We can write this distribution as eight vectors $\mathbf{P}_x=[p(0|x,1), p(0|x,2), p(0|x,3)]$. The vectors $\mathbf{P}_x$ will not perfectly satisfy the operational equivalence constraint \eqref{opeq}. Therefore, we aim to map them to other distributions $\mathbf{P}'_x$ which perfectly satisfy \eqref{opeq}. This can be done by noting that an experiment in which $\{\mathbf{P}_x\}$ is realised, also constitutes an effective realisation of all distributions in the convex hull of $\{\mathbf{P}_x\}$ (due to linearity).  Hence, we set 
\begin{equation}
\mathbf{P}'_x=\sum_{x'} \omega_{x}^{x'}\mathbf{P}_{x'},
\end{equation}
where for $\forall x$ $\{\omega_x^{x'}\}_{x'}$ is a probability distribution. We search a set of distributions $\{\omega_x\}$ that maximises the witness of preparation contextuality while also enforcing \eqref{opeq}. This problem is solved with a linear program 
\begin{align}
&\qquad\qquad  \mathcal{A}_1^{\text{post}}=\max_{\{\omega\}} \mathcal{A}_1^{\text{pre}}[\{\mathbf{P}_x'\}] & \text{such that } \forall r\in\{011,101,110,111\} \qquad \sum_{r\cdot x=0}\mathbf{P}'_{x}=\sum_{r\cdot x=1}\mathbf{P}'_{x}.
\end{align}
In addition, we can employ the quantity $F=\sum_x \omega_{x}^x$ as a measure of the closeness of the observed and post-processed data. Moreover, this procedure can be straightforwardly adapted to the experimental results obtained for Alice-Bob$_2$ and Alice-Bob$_3$. The minor difference is that the preparation procedure for e.g.~Alice-Bob$_2$ effectively becomes the average state relayed by Bob$_1$ to Bob$_2$. Thus, change the definition of vectors $\mathbf{P}_x$ to instead apply to the distributions $p(b_2=0|x,y_2)$ and $p(b_3=0|x,y_3)$ respectively and proceed in analogy with the above. 

Solving the above linear program, we have obtained the following results for the three demonstrations of preparation contextuality.
\begin{align}
& \mathcal{A}_1^{\text{pre}}=0.687 & \mathcal{A}_1^\text{post}=0.683 && F=0.9690\\
& \mathcal{A}_2^{\text{pre}}=0.675 & \mathcal{A}_2^\text{post}=0.670  &&  F=0.9537\\
& \mathcal{A}_3^{\text{pre}}=0.681 & \mathcal{A}_3^\text{post}=0.677 && F=0.9700
\end{align}

\end{document}